\newtheorem{definition}{Definition}
\newtheorem{theorem}{Theorem}
\newtheorem{lemma}{Lemma}
\title{Hipsters and the Cool: A Game Theoretic Analysis of Social Identity, Trends and Fads}
\author[1]{Russell Golman\thanks{Corresponding Author; E-mail: rgolman@andrew.cmu.edu}}
\author[2]{Aditi Jain}
\author[2]{Sonica Saraf}
\affil[1]{\small Department of Social and Decision Sciences, Carnegie Mellon University}
\affil[2]{\small Department of Mathematics, Carnegie Mellon University}
\date{\today}
\begin{document}
\maketitle
\doublespacing

\centerline{{\Large Abstract}} \vspace{1mm}
 Cultural trends and popularity cycles can be observed all around us, yet our theories of social influence and identity expression do not explain what perpetuates these complex, often unpredictable social dynamics.  We propose a theory of social identity expression based on the opposing, but not mutually exclusive, motives to conform and to be unique among one’s neighbors in a social network.  We then model the social dynamics that arise from these motives.  We find that the dynamics typically enter random walks or stochastic limit cycles rather than converging to a static equilibrium.  We also prove that without social network structure or, alternatively, without the uniqueness motive, reasonable adaptive dynamics would necessarily converge to equilibrium.  Thus, we show that nuanced psychological assumptions (recognizing preferences for uniqueness along with conformity) and realistic social network structure are both necessary for explaining how complex, unpredictable cultural trends emerge.

\vspace{4mm}
\noindent {\bfseries Keywords}: Conformity $|$ Games on Social Networks $|$ Popularity Cycles $|$ Social Dynamics $|$ Uniqueness

\clearpage
\section*{Introduction}
Popular cultural practices come into and out of fashion.  Researchers have observed boom-and-bust cycles of popularity in music, clothing styles, given names, automobile designs, home furnishings, and even management practices \citep{Shuker,RichardsonKroeber,Reynolds1968,Sproles1981,Berger2008,Berger2009,Lieberson,LiebersonLynn,Robinson1961,Abrahamson1991,Zuckerman2012}. Popularity cycles appear to be driven by social influence, e.g., by people adopting the music that their friends listen to or that they perceive as popular \citep{Salganik06,Salganik08}.  At the individual level, people are constantly looking for new ways to express their preferred social identities \citep{Hetherington,RentfrowGosling,Berger2008,Chan2012}.  The resultant social dynamics do not typically converge to equilibrium.  What are the social forces that lead to such perpetual change and novelty?

Social pressure to conform is a powerful force when behavioral patterns across a society shift in unison.  Psychologists since Asch have recognized the remarkable strength of the conformity motive, stemming from a fundamental goal to fit in as part of a social group \citep{Asch1955,Asch1956,Cialdini1998}.  People tend to feel uncomfortable about considering, holding, and expressing beliefs that conflict with the prevailing views around them as well as about behaving oddly, in ways that might expose oneself as an outsider to the group \citep{AkerlofKranton,PBC}.  Given the conformity motive alone, we might expect to observe convergence to an equilibrium in which society becomes monolithic, yet instead we actually observe persistent diversity.

Opposing the motive to conform is a similarly universal human need for uniqueness \citep{SnyderFromkin,Lynn2002}.  While the desire to differentiate oneself clearly works against the desire to blend in \citep{Imhoff2009}, Chan, Berger and van Boven \citeyearpar{Chan2012} demonstrate that people simultaneously pursue assimilation and differentiation goals, aiming to be identifiable, but not identical \citep[see also][]{Leibenstein,Robinson1961}.  Preferences for idiosyncratic behavioral patterns can preserve diversity \citep{Smaldino2015a}.  Still, the question remains why behavioral patterns often do not remain in a stable equilibrium with everyone finding an optimal balance between distinctiveness and conformity.  Why instead do behavioral patterns go through perpetual change, with particular behaviors cycling into and out of fashion as cultural trends play out? 

Here, we show that along with conformity and uniqueness motives, a realistic network of social interaction is a critical, necessary ingredient for complex social dynamics to emerge.  Specifically, we show that reasonable adaptive dynamics that would necessarily converge to a static equilibrium given random interactions in a well-mixed pool of people instead typically enter random walks or stochastic limit cycles, and thus never converge, when interactions are restricted to individuals' local neighborhoods in their social networks.   

A natural theoretical approach for investigating social influence on decisions is to use game theory. The conformity motive in isolation would create a Keynesian beauty contest, in which what is cool (like what is beautiful) is just what everybody else believes is cool \citep{Keynes}.  The uniqueness motive in isolation would create a congestion game, in which the objective is simply to be distinct from as many other people as possible \citep{Rosenthal73}.  Both games are known to be potential games, for which convergence to a pure strategy Nash equilibrium is practically guaranteed \citep{Monderer1996,Monderer1996a}.  When both motives co-exist and the game is played on a realistic social network, however, the dynamics are more complex. 

Cultural trends can be modeled more realistically as the dynamics of a game on a social network because social influence is mediated by a social network \citep{JacksonZenou}.  Social influence on expressions of individual identity is transmitted whenever an individual observes another person whom he would like to identify with, so the relevant social network is defined by directed connections corresponding to observation.  The connected components of the social network may correspond to distinct social groups, each with its own emergent subculture.  

The desire for uniqueness within one's own social group should not be conflated with a desire for differentiation across groups \citep{Chan2012}.  In models of identity signaling, membership in one group may be preferable to membership in another, and people want to strategically distinguish themselves from those in the less favorable group; e.g., an upper class tries to distinguish itself from the bourgeois while the bourgeois tries to imitate them \citep{Berger2007}.  The dynamic of differentiation and imitation has been hypothesized to lead to fashion cycles \citep{Karni,Pesendorfer95}.  This dynamic does not, however, preserve diversity within groups.  Desire for uniqueness is a necessary part of the explanation.  Our model features in-group conformity and uniqueness motives; it could be augmented with a desire for differentiation across groups, but for parsimony we assume that people care only about their fit within their own groups.

\section*{Model 1: Social Identity Expression in a Well-Mixed Population}
We model the expression of social identity as a game played by a population of $N$ individuals.  Let us say there are $d$ aspects (or dimensions) of identity.  Each person $i$ chooses an expression of his identity $x_i \in \{a..b\}^d$, i.e., represented as a tuple of $d$ integers from some interval.  For example, in the case of choosing a color to wear, three integers between $0$ and $255$ might correspond to shades of red, green, and blue that mix together to form any color.

A person's degree of conformity in the population depends on the distance between his expressed identity and the average (population mean) expression of identity, $\Vert x_i - \bar{x} \Vert$.
A person's degree of uniqueness in the population depends on the number of others who adopt the exact same expression of identity as him, denoted as $n_i(X)$ where $X$ is the entire population's profile of expressed identities.  Putting together conformity and uniqueness motives, we model person $i$'s utility given the profile of expressed identities as
\begin{equation}
u_i(X) = - \Vert x_i - \bar{x} \Vert^2 - \lambda \, n_i(X)
\label{UtilityEq}
\end{equation} 
where $\lambda$ is a parameter that describes the strength of the uniqueness motive relative to the conformity motive.  This utility function describes a person whose goal is to be similar to everybody, yet the same as nobody.

Over time people may change their expressions of identity to achieve higher utility.  We need not fully prescribe this process, but assume only that people make changes that increase their own utility, in accordance with some {\em better-reply dynamics} \citep{Monderer1996,Friedman}.  
\begin{definition}[Better-reply dynamics]
At any given time $t$, one person $i$ may consider switching from $x_i$ to $x'_i$; he switches if and only if $u_i(X') > u_i(X)$
; and for each person $i$ and any best response $x^*_i$ (to $X(t)$), the expected time until person $i$ considers switching to $x^*_i$ is finite.     
\end{definition}  
The motivation for better-reply dynamics is that people are boundedly rational and adaptive.  They can see what the people around them are doing and can search for something better (myopically), but they do not instantaneously react to changes in other people's behavior or anticipate these changes before they occur.  Many commonly assumed adaptive learning dynamics are particular specifications of better-reply dynamics. 

\section*{Results: Social Dynamics in a Well-Mixed Population}
\begin{theorem}
Suppose people derive utility from both their conformity and their uniqueness in the population, as in Equation~(\ref{UtilityEq}).  Then any better-reply dynamics necessarily converges to a pure strategy Nash equilibrium.
\label{ConvergenceThm}
\end{theorem}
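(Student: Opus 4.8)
The plan is to exhibit this game as a finite exact potential game and then invoke the standard convergence result for better-reply dynamics on potential games. Since each $x_i$ ranges over the finite set $\{a..b\}^d$ and there are $N$ players, the space of profiles $X$ is finite, so it suffices to construct a function $\Phi(X)$ whose change under any unilateral deviation equals the change in the deviating player's utility. Such a $\Phi$ must strictly increase along every improving move, and a function taking finitely many values can increase only finitely often; convergence then follows.

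I would build $\Phi$ additively from the two motives, $u_i = u_i^{\mathrm{conf}} + u_i^{\mathrm{uniq}}$. For the uniqueness term, write $m_v(X)$ for the number of people expressing value $v$, so that $n_i(X) = m_{x_i}(X) - 1$. This is an anonymous congestion game, and the natural candidate is $\Phi_u(X) = -\lambda \sum_v \binom{m_v(X)}{2}$. A direct check shows that when player $i$ moves from $v$ to $w$, both $\Delta u_i^{\mathrm{uniq}}$ and $\Delta \Phi_u$ equal $-\lambda(m_w - m_v + 1)$, so $\Phi_u$ is an exact potential for the uniqueness component. For the conformity term, the candidate is a positive multiple of the total dispersion, $\Phi_c = -\frac{1}{N}\sum_{i<j}\|x_i - x_j\|^2 = -\sum_i \|x_i - \bar{x}\|^2$.

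The main obstacle is that $\bar{x}$ itself depends on $x_i$, so a deviation by $i$ changes both $x_i$ and the target $\bar{x}$; the conformity game is not literally ``each player minimizes distance to a fixed point.'' I would resolve this with the identity $x_i - \bar{x} = \frac{N-1}{N}\,(x_i - \bar{x}_{-i})$, where $\bar{x}_{-i}$ is the mean of the other $N-1$ players and is invariant under $i$'s deviation. This shows player $i$ effectively minimizes $\|x_i - \bar{x}_{-i}\|^2$. A short computation, expanding $\Delta u_i^{\mathrm{conf}}$ and the change in $\sum_{i<j}\|x_i - x_j\|^2$ (whose only affected terms involve $i$), then yields $\Delta u_i^{\mathrm{conf}} = \frac{N-1}{N}\,\Delta \Phi_c$. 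Rescaling $\Phi_c$ by the positive constant $\frac{N-1}{N}$ makes it exact, and because positive rescaling preserves the sign of every change, the combined function $\Phi = \frac{N-1}{N}\Phi_c + \Phi_u$ is an exact potential for the full utility $u_i$.

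Finally I would close the argument using the definition of better-reply dynamics. At any non-equilibrium profile, some player $i$ has a best response $x_i^*$ with $u_i(X') > u_i(X)$, and by hypothesis considers it in finite expected time; upon considering it he switches, strictly raising $\Phi$. Since $\Phi$ takes only finitely many values on the finite profile space, only finitely many such strictly improving moves can occur. Hence in finite expected time the process reaches a profile admitting no improving unilateral deviation, i.e., a pure-strategy Nash equilibrium, where it then remains.
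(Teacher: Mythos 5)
Your proposal is correct, and it arrives at literally the same potential function as the paper's Lemma~\ref{PotentialFunctionLemma}: since $\sum_i n_i(X) = \sum_v m_v(X)\left(m_v(X)-1\right) = 2\sum_v \binom{m_v(X)}{2}$, your $\Phi = \frac{N-1}{N}\Phi_c + \Phi_u$ equals the paper's $-\sum_i \left(\frac{N-1}{N}\Vert x_i - \bar{x}\Vert^2 + \frac{\lambda}{2}\, n_i(X)\right)$, and your closing step (finite profile space plus exact potential implies only finitely many improving moves, with the finite-expected-consideration-time condition ruling out halting at a non-equilibrium) is precisely the Monderer--Shapley argument the paper invokes by citation. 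Where you genuinely differ is in how exactness is verified. The paper checks the combined potential directly: it expands $\Vert x_j - \bar{x}\Vert^2 - \Vert x'_j - \bar{x}'\Vert^2$ separately for $j=i$ and $j \neq i$ using $\bar{x} = \frac{N-1}{N}\bar{x}_{-i} + \frac{1}{N}x_i$, and relies on the cancellation $\sum_{j \neq i}(x_j - \bar{x}_{-i}) = 0$ to eliminate the cross terms. You instead modularize by motive: the identity $\sum_i \Vert x_i - \bar{x}\Vert^2 = \frac{1}{N}\sum_{i<j}\Vert x_i - x_j\Vert^2$ turns the conformity potential into a pairwise sum in which a deviation by $i$ touches only the $N-1$ pairs containing $i$, which makes the factor $\frac{N-1}{N}$ and the role of $\bar{x}_{-i}$ transparent rather than emerging from cancellation; and you recognize the uniqueness component as an anonymous congestion game with Rosenthal's potential $-\lambda\sum_v \binom{m_v}{2}$. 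Your route explains where the paper's coefficients come from and generalizes more readily (e.g., to other anonymous congestion-type externalities), at the cost of needing the observation that exact potentials for the two components add --- which is immediate because exactness is an equality of utility and potential changes; your aside about ``sign preservation'' is superfluous for that purpose, since after rescaling by $\frac{N-1}{N}$ you have exact equality of changes, which is exactly what summing requires.
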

The proof is presented in the SM Appendix.  It follows from Lemma 1 in the SM Appendix, which identifies an exact potential function for this game.  Two examples of Nash equilibria, among many that exist, are shown in Figure~\ref{fig1:Equilibria}.
\begin{figure*}[!]
	\centering
  \includegraphics[width=8.1cm]{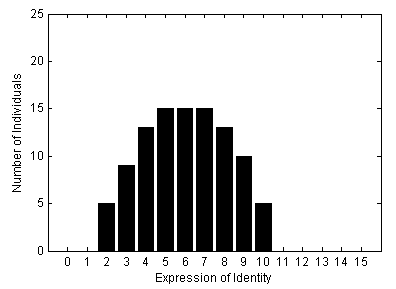}
  \includegraphics[width=8.1cm]{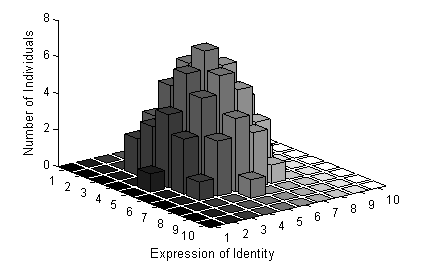}
	\caption{Two Nash equilibria distributions of identity expression for populations of $N=100$ individuals.  We set $\lambda=1.5$ for this illustration. ({\bfseries A}) One-dimensional identity expression over the domain $\{0..15\}$.  ({\bfseries B}) Two-dimensional identity expression over the domain $\{1..10\}^2$.  By symmetry, the distributions can be shifted anywhere within these (or wider) domains, and many strategy profiles give rise to the same population distributions.  Even after accounting for these symmetries, these Nash equilibria are not unique.}
	\begin{flushleft}
	\vspace{-9.8cm} A \hspace{8.1cm} B \vspace{8.5cm}
	\end{flushleft}
	\label{fig1:Equilibria}
\end{figure*}

Theorem~\ref{ConvergenceThm} says that in a well-mixed population, in the long run we will not see popularity cycles, perpetual change, or novelty.  The fact that we do, in reality, observe popularity cycles, perpetual change, and novelty suggests that we should consider a more realistic model.  We now consider the social dynamics that result from assuming that people care only about the expressed identity of their immediate neighbors in their social network.

\section*{Model 2: Social Identity Expression in Social Networks}
A social network is described by an adjacency matrix $A$ where $a_{ij}=1$ if person $i$ observes, and thus cares about, person $j$'s expressed identity (and equals $0$ if not).  Let $\eta(i) = \{j: a_{ij}=1\}$ denote the set of people that person $i$ observes, i.e., his neighbors. 

Conformity among one's neighbors depends on distance from one's neighbors' average identity, $\bar{x}_{\eta(i)}$.  Uniqueness among one's neighbors depends on the number neighbors who adopt the same expression of identity as oneself, denoted $\tilde{n}_{i}(X;\eta(i))$.  Thus, we now model person $i$'s utility given the profile of expressed identities $X$ and his set of neighbors $\eta(i)$ as
\begin{equation}
u_i(X) = - \Vert x_i - \bar{x}_{\eta(i)} \Vert^2 - \lambda \, \tilde{n}_i(X;\eta(i)).
\label{UtilityOnNetworkEq}
\end{equation}

\section*{Results: Social Dynamics in Social Networks}
\begin{theorem}
Suppose people derive utility from both their conformity and their uniqueness among their neighbors in a social network, as in Equation~(\ref{UtilityOnNetworkEq}) with $\lambda>1$.  Then there exists a social network adjacency matrix $\hat{A}$ such that no pure strategy Nash equilibrium exists and, thus, better-reply dynamics never converge to an absorbing state.
\label{Non-ConvergenceThm}
\end{theorem}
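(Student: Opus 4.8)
The plan is to exhibit one explicit network on which the conformity and uniqueness motives cannot be simultaneously satisfied, working in the simplest identity space: one dimension ($d=1$) over a domain $\{a..b\}$ with at least two points. The candidate $\hat{A}$ is the directed three-cycle, with $\hat{a}_{12}=\hat{a}_{23}=\hat{a}_{31}=1$ and all other entries zero, so that each person observes exactly one neighbor: $\eta(1)=\{2\}$, $\eta(2)=\{3\}$, $\eta(3)=\{1\}$. The whole argument rests on first pinning down the best response of a person with a single neighbor and then showing that these best responses are globally incompatible around the cycle.

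First I would compute the best response. When person $i$ has the single neighbor $j$, we have $\bar{x}_{\eta(i)}=x_j$ and $\tilde{n}_i=1$ exactly when $x_i=x_j$, so maximizing $u_i$ amounts to minimizing the cost $c(x_i)=(x_i-x_j)^2+\lambda\,\mathbf{1}[x_i=x_j]$. Matching the neighbor costs $\lambda$, sitting at distance one costs $1$, and distance $k\ge 2$ costs $k^2\ge 4$. Since $\lambda>1$ and the domain always contains a point at distance one from $x_j$, the unique best responses are $x_j\pm 1$ (intersected with the domain), a nonempty set. The key qualitative conclusion is that with $\lambda>1$ every player strictly prefers to be adjacent to, but distinct from, the single person he observes.

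Next I would derive a contradiction from requiring all three players to best-respond at once. A pure-strategy Nash equilibrium would force $|x_1-x_2|=|x_2-x_3|=|x_3-x_1|=1$. This is a parity obstruction: $|x_1-x_2|=1$ forces $x_1,x_2$ to have opposite parities, and likewise $x_2,x_3$, whence $x_1,x_3$ share a parity, contradicting $|x_3-x_1|=1$. Equivalently, the three signed differences sum to zero yet each equals $\pm 1$, and three odd numbers cannot sum to zero. Hence no profile is a simultaneous best response, so no pure-strategy Nash equilibrium exists on $\hat{A}$.

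Finally I would translate non-existence of an equilibrium into non-convergence of the dynamics. An absorbing state of any better-reply dynamics is, by definition, a profile from which no player can switch to strictly higher utility, which is precisely a pure-strategy Nash equilibrium; since none exists, there is no absorbing state and the dynamics can never settle. I expect the only real subtlety, the \emph{hard part}, to be the choice of construction rather than any calculation: one must recognize that giving each node a single out-neighbor is what makes the best response cleanly ``adjacent but unequal,'' and that an odd directed cycle then supplies exactly the required parity clash. Everything after that, namely the best-response computation and the parity count, is routine.
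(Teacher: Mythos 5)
Your proposal is correct and takes essentially the same approach as the paper: the paper's proof also uses the directed $3$-cycle in which each person observes exactly one other, derives the same ``distance exactly one'' best-response correspondence, and concludes from the non-$2$-colorability of odd cycles, which is precisely your parity obstruction. The only cosmetic difference is that you specialize to $d=1$ while the paper states the best responses for general $d$; your parity argument extends immediately by tracking the parity of the coordinate sum.
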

\begin{proof}
By construction.  We provide an example of a social network with $N=3$ people that illustrates the result.  (Any larger social network that contains this network as an out-component also suffices.)  Let person $1$ observe (only) person $2$, person $2$ observe (only) person $3$, and person $3$ observe (only) person $1$.  

Observe that the best response correspondence for each person is as follows:
\begin{align*}
x^*_1 &\in \{x: \Vert x - x_2 \Vert^2 =1 \}     \\
x^*_2 &\in \{x: \Vert x - x_3 \Vert^2 =1 \}			\\
x^*_3 &\in \{x: \Vert x - x_1 \Vert^2 =1 \}.     
\end{align*}
Each person wants to be one unit of distance away from the person he is observing.  However, it is impossible for all three people to simultaneously choose best responses because of the mathematical fact that odd-length cycle graphs are not $2$-colorable. 
\end{proof}

Theorem~\ref{Non-ConvergenceThm} says that with only local interactions in a social network, perpetually changing identity expression and popularity cycles become possible.  Observe that the uniqueness motive is critical for obtaining this result.  If we were to eliminate the uniqueness motive by setting $\lambda=0$, then any homogeneous profile of expressed identities (with $x_i$ identical for all $i$) would be a pure strategy Nash equilibrium, regardless of the social network structure.  The uniqueness motive along with the local interactions together allow for more realistic, complex social dynamics.

Still, Theorem~\ref{Non-ConvergenceThm} only provides an existence result constructed with a highly stylized, simplistic social network.  It does not tell us whether complex social dynamics typically emerge from our model when people are connected by realistic social networks.  We now use computational modeling to explore the dynamics of our model on realistic social networks. 

We used a variant of the Jin-Girvan-Newman algorithm \citep{Jin2001} to create a sample of $100$ directed social networks with a high level of clustering and community structure and limited out-degree ({\itshape Material and Methods}).  For each of these social networks, we repeatedly computed better reply dynamics based on the utility function in Equation~(\ref{UtilityOnNetworkEq}) to see how often the dynamics converged to equilibrium within $30,000$ time steps ({\itshape Material and Methods}).  (We chose the cutoff at $30,000$ time steps based on first computing the dynamics in the full, well-mixed population, for which Theorem~\ref{ConvergenceThm} tells us that they must converge, and finding that across $100$ trials, 
the dynamics always converged within $1600$ time steps.)  If the dynamics did not converge within $30,000$ time steps, we classified them as non-convergent (for that trial).  

Figure~\ref{fig2:} shows snapshots of the dynamics on the first social network in our sample between $29,000$ and $30,000$ time steps.  Very quickly (i.e., within just a few hundred time steps) everybody adopts identities in the range $\{0..3\}$, but individuals continually change thereafter.  We can see considerable change in individual expressions of identity in each snapshot. The dynamics do not converge.
\begin{figure*}
	\centering
  \includegraphics[width=5.25cm]{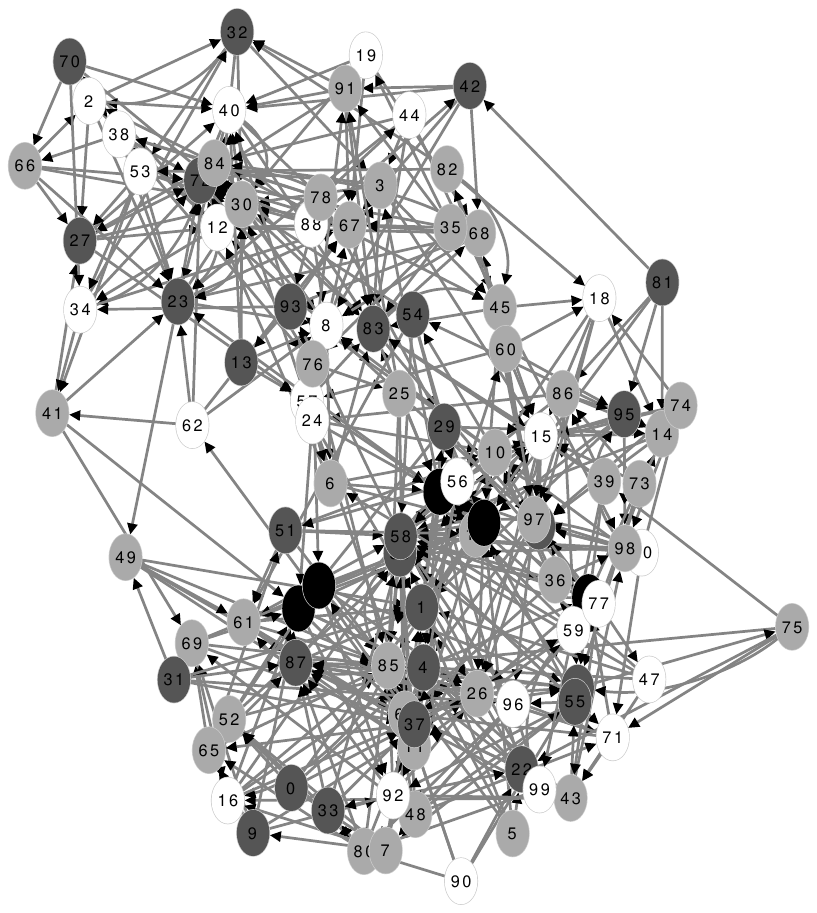}
	\includegraphics[width=5.25cm]{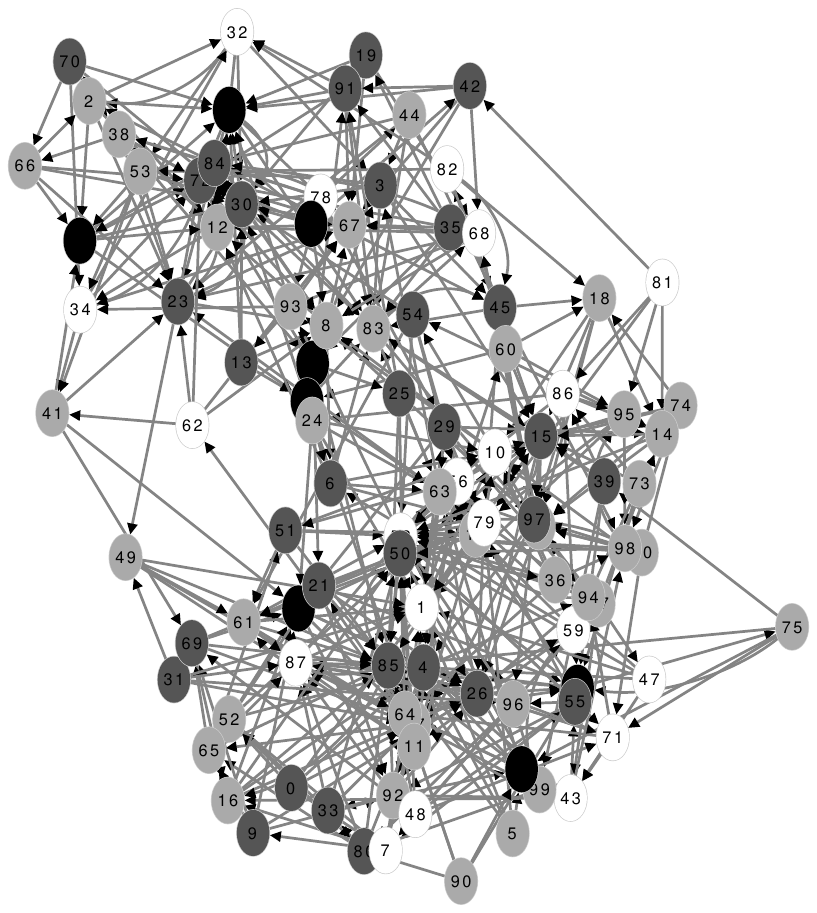}
	\includegraphics[width=5.25cm]{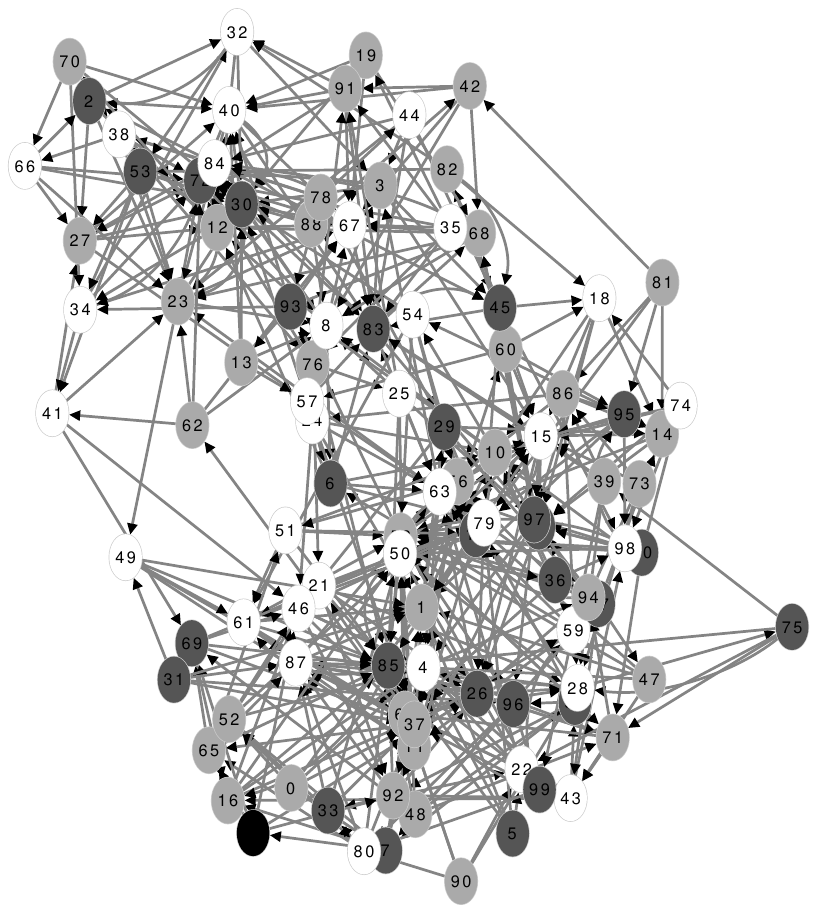}\\
	\includegraphics[width=5.25cm]{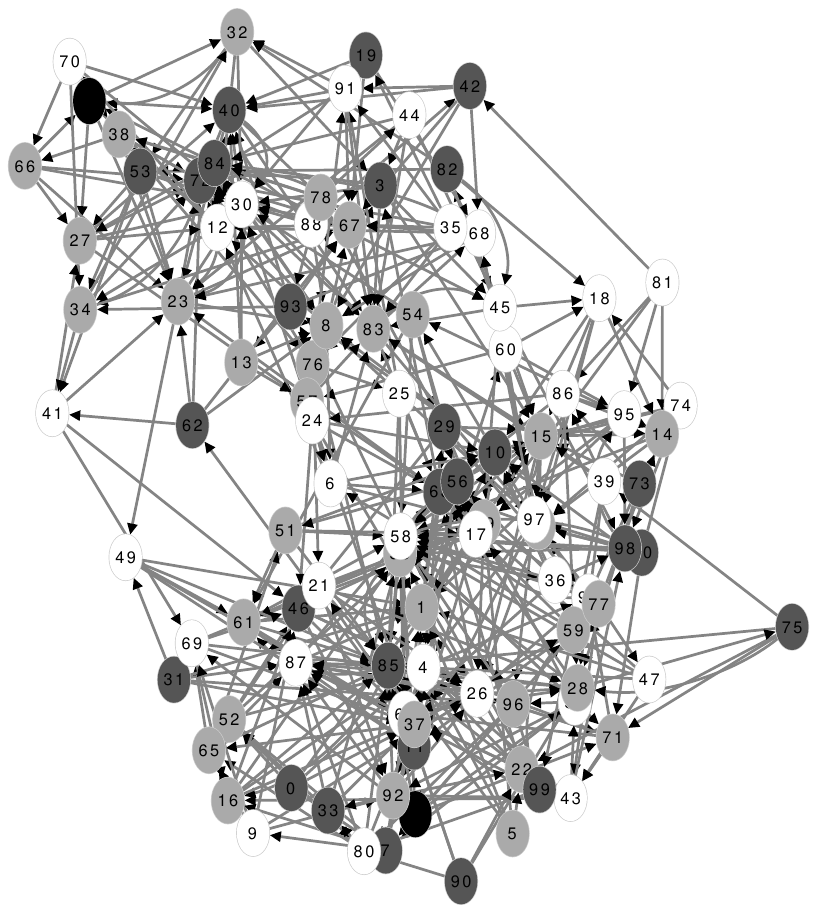}
	\hspace{.5cm}
	\includegraphics[width=5.25cm]{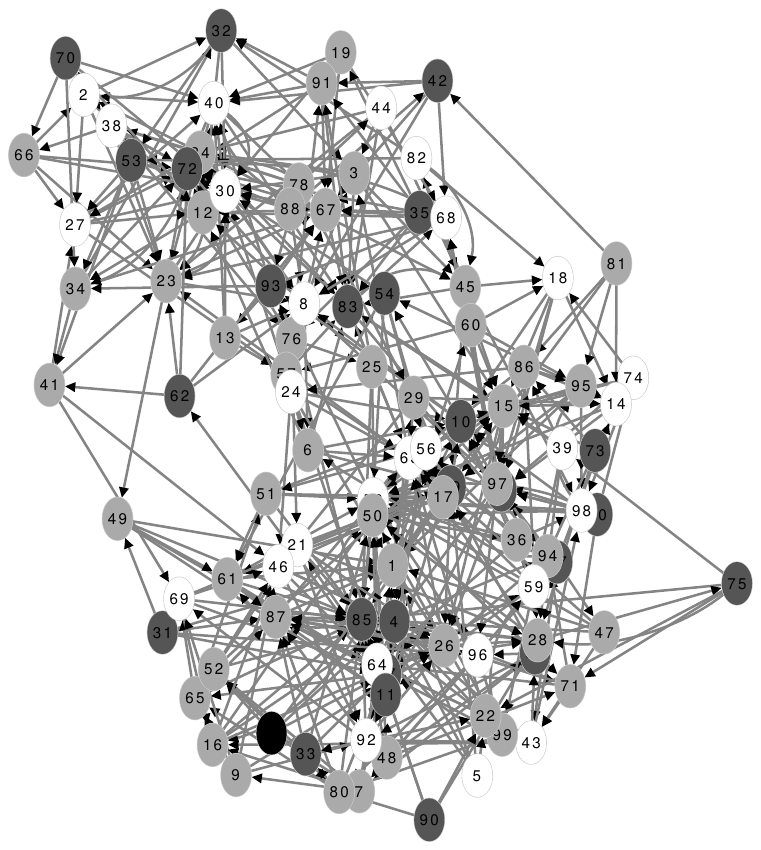}
	\caption{Identity expression in social network $\#1$ changing over time.  Arrows point from an individual to the people he observes.  The shading of the nodes corresponds to each person's expression of identity, from $x_i=0$ if node $i$ is white to $x_i=3$ if node $i$ is black.  Shown from left to right at $29100$, $29300$, $29500$, $29700$ and $29900$ time steps.}
	\label{fig2:}
\end{figure*}

On average, across all $100$ social networks in our sample, the dynamics were non-convergent for $99.9\%$ of our trials. Figure~\ref{fig3:Histogram} presents the results of $100$ total trials for each of the $100$ social networks in our sample, showing the number of social networks having particular frequencies of non-convergence.  
\begin{figure}[t]
	\centering
		\includegraphics[width=.5\columnwidth]{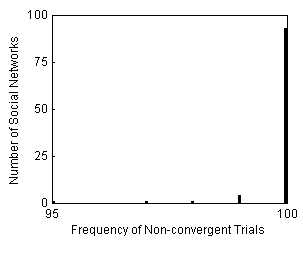}
	\caption{Histogram showing the number of social networks for which we observe particular frequencies of non-convergent trials.}
	\label{fig3:Histogram}
\end{figure}
For each social network in our sample, the dynamics were non-convergent for at least $95$ out of the $100$ trials.  For $93\%$ of the social networks, the dynamics never converged.

These results tell us that with local interactions on realistic social networks, the interplay of conformity and uniqueness motives produces social dynamics for identity expression that are indeed typically non-convergent.  People continually change their expressed identities, and certain forms of expression come into and out of fashion in unpredictable cycles.  Popularity cycles are inherently unpredictable in the model because people typically have multiple better replies (and even multiple best responses) to choose from in the face of most profiles of their neighbors' identity expression.  The multiplicty of paths the dynamics could take leaves room for idiosyncrasy.

The pattern of widespread non-convergence across the entire sample of social networks appears to be robust to variations in the process of search for a better response (i.e., it can be random or sequential), variations in the distribution and average level of out-degree in the social network (short of being fully connected, of course), and variations in the preference parameter $\lambda$ (over the range $\lambda>1$), based on additional trials reported in the SM Appendix.  The social network with ID $37$ is the one that most frequently permits convergence to equilibrium.  Figure~\ref{fig4:} shows this social network and one example of a Nash equilibrium on it.  
\begin{figure}
	\centering
	\includegraphics[width=.5\columnwidth]{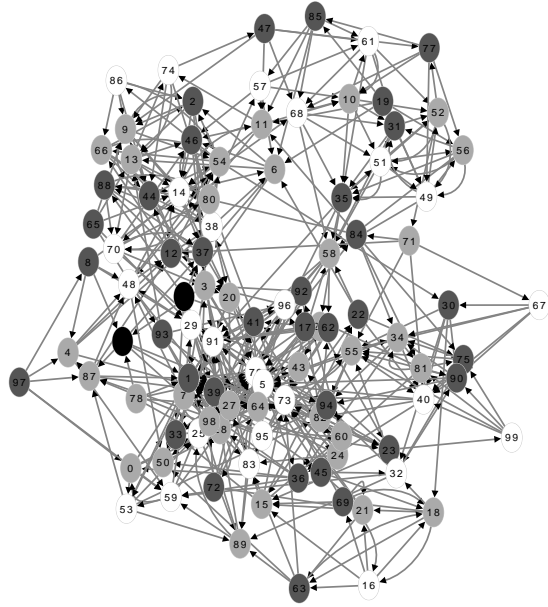}
	\caption{Social network $\#37$ in a Nash equilibrium.  Arrows point from an individual to the people he observes.  The shading of the nodes corresponds to each person's expression of identity, from $x_i=0$ if node $i$ is white to $x_i=3$ if node $i$ is black.}
	\label{fig4:}
\end{figure}

Directed connections in the social network appear to play an important role in obtaining typically non-convergent dynamics.  We explored the better-reply dynamics after inserting reciprocal connections in all of our directed social networks and found that on these (now) undirected social networks, the dynamics converged to equilibrium in $98.7\%$ of our trials.  (The dynamics converged within $3000$ time steps in over $97\%$ of our trials, providing reassurance that findings of non-convergence are fairly robust to allowing the dynamics more time to converge.) Intuitively, directed connections in the social network make it possible that an individual's changing expression of his identity imposes a negative externality on people who observe him, but who he does not notice. The ripple effects may persist or fade, and in more realistic, more complex social networks, they tend to persist indefinitely. 

\section*{Discussion}
Our findings help us understand the role of social networks and local interaction in the dynamics of cultural trends.  Popularity cycles, perpetual change, and novel expressions of social identity should be expected when people observe their neighbors in realistic, directed social networks and care about being unique as well as fitting in.  Such complex social dynamics of identity expression are incompatible with simplistic assumptions disregarding social network structure or reducing social influence to mere conformity pressure absent a desire to individuate oneself.

Recognition of conformity and uniqueness as opposing, but not mutually exclusive, motives is also part of optimal distinctiveness theory \citep{Brewer,Leonardelli2010}.  However, optimal distinctiveness theory posits that people form collective identities by choosing to associate themselves with social groups, whereas our concept of social identity operates at the level of the individual.  In our view, collective identities emerge at the level of the group based on their members' individual identities.  From the alternative, similarly valid perspective, we could propose that individual identities emerge from a psychological process of finding consonance between the collective identities of the many groups that an individual affiliates with at any point in time.  Connecting these perspectives requires deeper understanding of how people choose to associate with or withdraw from social groups, how social network structure endogenously evolves.  While this integration remains beyond our present grasp, we find it useful to have complementary theories aimed at different levels of social identity.     

We use game theory and computational modeling here to describe social dynamics with mathematical precision.  Social phenomena do not always reflect individual preferences \citep{Schelling69,Schelling71}.  Mathematical modeling helps us understand the relationship between individual motives and aggregate social dynamics when interactions generate nontrivial feedbacks.  
Our work here is part of a tradition of formal modeling of social identity and fashion \citep{Bikhchandani1992,Miller1993,StrangMacy,Bettencourt2002,Tassier2004,Acerbi,Simulation2015,Smaldino2015a,Smaldino2015}.  This approach yields us deep theoretical insight, and we hope it inspires more research leading to further insights into   
social dynamics and identity expression.

\section*{Materials and Methods} 
\subsection*{The Social Networks}
We borrow Jin, Girvan, and Newman's Model II algorithm for growing undirected social networks \citep{Jin2001} and modify it to generate directed social networks with $N=100$ people, each of whom can observe up to a maximum of $5$ neighbors.  The network is initialized with all $100$ people and no connections.  The following three steps are then repeated $100$ times:
\begin{enumerate}
\item Choose $3$ pairs of individuals uniformly at random.  For each pair $i$ and $j$, if $i$ observes less than $5$ people and does not already observe $j$, then $i$ begins to observe $j$; else, if $j$ observes less than $5$ people and does not already observe $i$, then $j$ begins to observe $i$.  
\item Randomly select triads $i$, $j$, and $k$ such that $i$ observes $k$ and $k$ observes $j$ or that $i$ and $j$ both observe $k$.  If $i$ observes less than $5$ people and does not already observe $j$, then $i$ begins to observe $j$.  (Real social networks exhibit both patterns of directed closure \citep{BRZ-Romero}.)
\item Randomly select and break $0.5\%$ of connections (rounded up).
\end{enumerate}
All $100$ social networks and the Python source code used to create them will be made available in the SM Appendix.
\subsection*{The Game}
Our computational model adopts the following specification of parameter values for the game: $d=1$; $\{a..b\} = \{0..199\}$; $\lambda=1.5$.
\subsection*{The Better-Reply Dynamics}
Our computational model adopts a specification of the better-reply dynamics in which at each time step, one individual searches for (and upon discovery, adopts) a better reply to the current population profile.  Initial strategies are randomly (uniformly) distributed.  We check for convergence after every $200$ time steps by sequentially checking whether any individual can find a better reply.  In the other time steps, the individual searching for a better reply is randomly selected.  The Python source code and complete output data will be made available in the SM Appendix.

\section*{Acknowledgments}
\noindent \textbf{Funding:} 
This research did not receive any specific grant from funding agencies in the public, commercial, or not-for-profit sectors.\\
\noindent \textbf{Competing Interests:} The authors declare that they have no competing financial interests.\\

\bibliographystyle{apalike}
\bibliography{SocialIdentity}

\clearpage


\appendix
\section*{Supplementary Materials}
\subsection*{Formal Definitions}
We can express person $i$'s neighbors' average identity as \[\bar{x}_{\eta(i)} = \frac{1}{|\eta(i)|}\sum_{j \in \eta(i)} x_j.\]
We can express the number of $i$'s neighbors who adopt the same expression of identity as person $i$ as \[\tilde{n}_{i}(X;\eta(i)) = \sum_{j \in \eta(i)} \delta(x_i,x_j),\]
where $\delta$ is the  Kronecker delta function.  In a well-mixed population, we set $\eta(i)=\{j: j \neq i\}$ to recover $n_i(X)$ for all $i$. 
\subsection*{Supplementary Results and Proofs}
\begin{lemma}
In a well-mixed population with utility functions given in Equation~(\ref{UtilityEq}), the game has an exact potential function:
\[\Phi(X) = -\sum_{i=1}^N \frac{N-1}{N} \Vert x_i - \bar{x} \Vert^2 + \frac{1}{2} \lambda \, n_i(X).\]
\label{PotentialFunctionLemma}
\end{lemma}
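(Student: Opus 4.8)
The plan is to verify the defining property of an exact potential directly: for every player $i$ and every unilateral deviation $x_i \to x_i'$ (all other strategies held fixed, producing a new profile $X'$), I want to show
\[
\Phi(X') - \Phi(X) = u_i(X') - u_i(X).
\]
Since both $u_i$ and $\Phi$ split additively into a conformity part and a uniqueness part, I would establish the two parts separately and then add them.

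For the uniqueness part, I would write the aggregate count as a sum over unordered pairs, $\sum_{k} n_k(X) = 2\sum_{j<k}\delta(x_j,x_k)$, so that the corresponding piece of the potential is $-\tfrac12\lambda\sum_k n_k(X) = -\lambda\sum_{j<k}\delta(x_j,x_k)$. Under player $i$'s deviation only the pairs containing $i$ change, and the factor $\tfrac12$ is exactly what absorbs the fact that each such pair is counted once from each endpoint. A one-line computation then gives a change of $-\lambda\,[\,n_i(X')-n_i(X)\,]$, which is precisely the change in the uniqueness part of $u_i$.

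For the conformity part, the genuine subtlety is that moving $x_i$ also moves the mean $\bar x$, so every player's distance-to-mean term shifts, not just player $i$'s. The key step I would use to sidestep this coupling is the identity
\[
\sum_{k} \Vert x_k - \bar x \Vert^2 = \frac{1}{2N}\sum_{j,k}\Vert x_j - x_k\Vert^2,
\]
which rewrites the aggregate conformity cost purely in terms of pairwise distances and eliminates any explicit reference to $\bar x$. Substituting this into $\Phi$ shows that its conformity part equals $-\tfrac{N-1}{N^2}\sum_{j<k}\Vert x_j - x_k\Vert^2$, so under $i$'s deviation only pairs containing $i$ contribute. Expanding $\Vert x_i' - x_j\Vert^2 - \Vert x_i - x_j\Vert^2$ and using the identity $\sum_{j\neq i}(x_i - x_j) = N(x_i - \bar x)$, I would obtain a closed form for the change in the potential's conformity part. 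Separately I would compute the change in player $i$'s own conformity utility directly, using $x_i' - \bar x' = (x_i - \bar x) + \tfrac{N-1}{N}(x_i' - x_i)$, and check that the two expressions agree.

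The main obstacle is exactly this conformity bookkeeping: because the externality of $i$'s move is spread across all $N$ players through the shifting mean, the naive guess $-\sum_k\Vert x_k-\bar x\Vert^2$ is \emph{not} a potential, and the coefficient $\tfrac{N-1}{N}$ is precisely what reconciles the aggregate effect on all players with the private marginal effect felt by the deviator. I expect the lemma to reduce to the observation that both the change in the potential's conformity part and the change in $u_i$'s conformity part equal $-\tfrac{2(N-1)}{N}(x_i-\bar x)\cdot(x_i'-x_i) - \tfrac{(N-1)^2}{N^2}\Vert x_i'-x_i\Vert^2$, at which point the conformity and uniqueness halves combine to give the claimed equality.
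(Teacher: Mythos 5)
Your proposal is correct --- I verified the computations: the variance identity $\sum_k \Vert x_k - \bar{x}\Vert^2 = \frac{1}{2N}\sum_{j,k}\Vert x_j - x_k\Vert^2$ holds, the decomposition $x_i' - \bar{x}' = (x_i - \bar{x}) + \frac{N-1}{N}(x_i'-x_i)$ is right, and both the potential's and the deviator's conformity changes indeed reduce to the common value $-\frac{2(N-1)}{N}(x_i - \bar{x})\cdot(x_i'-x_i) - \frac{(N-1)^2}{N^2}\Vert x_i'-x_i\Vert^2$. But for the conformity part you take a genuinely different route from the paper. The paper works directly with the distance-to-mean representation: it splits the aggregate sum into the $j=i$ term and the $j\neq i$ terms, substitutes $\bar{x} = \frac{N-1}{N}\bar{x}_{-i} + \frac{1}{N}x_i$, tracks how \emph{every} player's term $\Vert x_j - \bar{x}\Vert^2$ shifts when $i$ moves the mean, and relies on the cancellation $\sum_{j\neq i}(x_j - \bar{x}_{-i}) = 0$ to kill the cross terms before recombining coefficients as $\frac{(N-1)^3}{N^3} + \frac{(N-1)^2}{N^3} = \left(\frac{N-1}{N}\right)^2$. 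Your pairwise-distance rewriting sidesteps all of that bookkeeping: once the potential's conformity part is expressed as $-\frac{N-1}{N^2}\sum_{j<k}\Vert x_j - x_k\Vert^2$, only pairs containing $i$ are affected by the deviation, exactly as in the uniqueness part --- so your proof treats both halves of $\Phi$ uniformly as sums over unordered pairs, and it makes transparent \emph{why} the coefficient $\frac{N-1}{N}$ is the right normalization. (Your uniqueness argument is the same as the paper's up to notation: the paper sums Kronecker deltas over ordered pairs and lets the symmetry of $\delta$ merge two sums, which is your factor-$\frac12$ observation.) The trade-off: the paper's calculation is self-contained brute force, while yours imports one standard identity in exchange for a shorter, more structural argument; both are complete and correct.
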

\begin{proof}
Consider a change in the profile of identities $X \rightarrow X'$ resulting from person $i$ alone changing his identity $x_i \rightarrow x'_i$, i.e., such that $x'_j = x_j$ for all $j \neq i$.  We need only show that the change in the potential function equals the change in $i$'s utility: $\Phi(X') - \Phi(X) = u_i(X') - u_i(X)$.

We express the change in the potential function as a sum of the changes in each term:
\begin{multline*}\Phi(X') - \Phi(X) = \\ \sum_{j=1}^N \frac{N-1}{N} \left(\Vert x_j - \bar{x} \Vert^2 - \Vert x'_j - \bar{x}' \Vert^2\right) +  \sum_{j=1}^N\frac{1}{2} \lambda \, \left(n_j(X) - n_j(X')\right).
\end{multline*}
We consider each of the two summations separately.  

We expand the first sum:
\begin{multline}
\sum_{j=1}^N \frac{N-1}{N} \left(\Vert x_j - \bar{x} \Vert^2 - \Vert x'_j - \bar{x}' \Vert^2\right) = \\
\frac{N-1}{N} \left(\Vert x_i - \bar{x} \Vert^2 - \Vert x'_i - \bar{x}' \Vert^2\right) + \\ \sum_{j \neq i} \frac{N-1}{N} \left(\Vert x_j - \bar{x} \Vert^2 - \Vert x'_j - \bar{x}' \Vert^2\right).
\label{expansionoffirstsum}
\end{multline}
We find it useful to express the average identity as $\bar{x} = \frac{N-1}{N}\bar{x}_{-i} + \frac{1}{N}x_i$.
Plugging in to the first term in Equation~(\ref{expansionoffirstsum}), we have:
\[\Vert x_i - \bar{x} \Vert^2 - \Vert x'_i - \bar{x}' \Vert^2 = \left(\frac{N-1}{N}\right)^2 \left(\Vert x_i - \bar{x}_{-i} \Vert^2 - \Vert x'_i - \bar{x}_{-i} \Vert^2\right).\]
Plugging in to the second term in Equation~(\ref{expansionoffirstsum}), expanding and canceling off common terms, we have for any $j \neq i$:
\begin{multline*} \Vert x_j - \bar{x} \Vert^2 - \Vert x'_j - \bar{x}' \Vert^2 = \\ \frac{1}{N^2} \left(\Vert x_i - \bar{x}_{-i} \Vert^2 - \Vert x'_i - \bar{x}_{-i} \Vert^2\right) + \frac{2}{N} (x_j - \bar{x}_{-i}) \cdot (x_i - x'_i). \end{multline*}
Observe that the last term here drops out when we sum over all $j \neq i$ because $\sum_{j \neq i} (x_j - \bar{x}_{-i}) = 0$.  The first term does not depend on $j$, so summing over all $j \neq i$ just multiplies this term by a factor of $(N-1)$.  Putting it all together, we find that Equation~(\ref{expansionoffirstsum}) simplifies to:
\begin{multline}
\sum_{j=1}^N \frac{N-1}{N} \left(\Vert x_j - \bar{x} \Vert^2 - \Vert x'_j - \bar{x}' \Vert^2\right) \\
= \left(\frac{(N-1)^3}{N^3} + \frac{(N-1)^2}{N^3}\right) \left(\Vert x_i - \bar{x}_{-i} \Vert^2 - \Vert x'_i - \bar{x}_{-i} \Vert^2\right) \\
=\left(\frac{N-1}{N}\right)^2 \left(\Vert x_i - \bar{x}_{-i} \Vert^2 - \Vert x'_i - \bar{x}_{-i} \Vert^2\right) \\ 
=\Vert x_i - \bar{x} \Vert^2 - \Vert x'_i - \bar{x}' \Vert^2.
\label{ConformityPartEq}
\end{multline}  

Now, returning to the second part of the change in the potential function, we can use the formal definition of $n_j(X)$ to write:
\[\sum_{j=1}^N\frac{1}{2} \lambda \, \left(n_j(X) - n_j(X')\right) = \frac{1}{2} \lambda \sum_{j=1}^N \sum_{k \neq j} \left(\delta(x_j,x_k) - \delta(x'_j,x'_k)\right).\]
The terms cancel whenever $j \neq i$ and $k \neq i$, so we are left with:
\begin{multline}
\sum_{j=1}^N\frac{1}{2} \lambda \, \left(n_j(X) - n_j(X')\right) = \\ 
\frac{1}{2} \lambda \left(\sum_{j\neq i} \left(\delta(x_j,x_i) - \delta(x'_j,x'_i)\right) + \sum_{k\neq i} \left(\delta(x_i,x_k) - \delta(x'_i,x'_k)\right)\right) \\ 
= \lambda \sum_{j\neq i} \left(\delta(x_j,x_i) - \delta(x'_j,x'_i)\right) \\
= \lambda \left(n_i(X) -n_i(X')\right).
\label{UniquenessPartEq}
\end{multline}

Putting Equations~(\ref{ConformityPartEq}) and~(\ref{UniquenessPartEq}) together, we have now shown that $\Phi(X') - \Phi(X) = u_i(X') - u_i(X)$.
\end{proof}
\subsubsection*{Proof of Theorem~\ref{ConvergenceThm}}
Theorem~\ref{ConvergenceThm} now follows from Lemma~\ref{PotentialFunctionLemma} by Monderer and Shapley's argument \citeyearpar{Monderer1996}.\qed

\end{document}